\newtheorem{thm}{Theorem}
\newtheorem{lem}[thm]{Lemma}
\newtheorem{rem}[thm]{Remark}
\newtheorem{exmp}{Example}
\newcommand{\E}[1]{\mathop{{\rm \bf E}\left\{#1\right\}}\nolimits}
\DeclareMathAlphabet\mathbfcal{OMS}{cmsy}{b}{n}
\begin{document}

\begin{frontmatter}
\title{Square-Root Algorithms for Maximum Correntropy Estimation of Linear Discrete-Time Systems in Presence of non-Gaussian Noise}

\author[CEMAT]{M.~V.~Kulikova\corref{cor}} \ead{maria.kulikova@ist.utl.pt} \cortext[cor]{Corresponding
author.}

\address[CEMAT]{CEMAT (Center for Computational and Stochastic Mathematics), Instituto Superior T\'{e}cnico, Universidade de Lisboa, \\ Av. Rovisco Pais 1,  1049-001 Lisboa, Portugal}

\begin{abstract}
Recent developments in the realm of state estimation of stochastic dynamic systems in the presence of non-Gaussian noise have induced a new methodology called  the maximum correntropy filtering. The filters designed under the maximum correntropy criterion (MCC) utilize a similarity measure (or {\it correntropy}) between two random variables as a cost function. They are shown to improve the estimators' robustness against outliers or impulsive noises. In this paper we explore the numerical stability of {\it linear} filtering technique proposed recently under the MCC approach. The resulted estimator is called the maximum correntropy criterion Kalman filter (MCC-KF). The purpose of this study is two-fold. First,  the previously derived MCC-KF equations are revised and the related Kalman-like equality conditions are proved. Based on this theoretical finding, we improve the MCC-KF technique in the sense that the new method possesses a better estimation quality  with the reduced computational cost compared with the previously proposed MCC-KF variant. Second, we devise some square-root implementations for the newly-designed improved estimator. The square-root algorithms are well known to be inherently more stable than the {\it conventional} Kalman-like implementations, which process the full error covariance matrix in each iteration step of the filter. Additionally, following the latest achievements in the KF community, all square-root algorithms are formulated here in the so-called {\it array} form. It implies the use of orthogonal transformations for recursive update of the required filtering quantities and, thereby, no loss of accuracy is incurred. Apart from the numerical stability benefits, the array form also makes the modern Kalman-like filters better suited to parallel implementation and to very large scale integration (VLSI) implementation. All the MCC-KF variants developed in this paper are demonstrated to outperform the previously proposed MCC-KF version in two numerical examples.
\end{abstract}

\begin{keyword}
Maximum correntropy criterion, Kalman filter, square-root filtering, robust estimation.
\end{keyword}

\end{frontmatter}

\section{Introduction}

In the past few years, the study of filtering techniques under the maximum  correntropy criterion (MCC) has become an important aspect of a hidden state estimation of stochastic dynamic systems in the presence of non-Gaussian noise~\cite{liu2007correntropy,gunduz2009correntropy,principe2010information,Cinar2012}. The MCC methodology implies that a statistical metric of a {\it similarity} between two random variables (or {\it correntropy}) is used as a cost function (or performance index) for designing the corresponding estimation method. The resulted MCC filters have become the methods of choice in signal processing and machine learning due to its robustness against outliers or impulsive noises compared to the classical Kalman filtering (KF); e.g., see the discussion in~\cite{xu2008pitch,he2011maximum,chen2014steady,chen2015convergence,ma2015maximum} and many others.

Being a {\it linear} estimator, the KF is an attractive and simple technique that requires only the computation of mean and covariance for constructing the optimal estimate of unknown dynamic state under the minimum mean square (MMS) criterion. For Gaussian systems, this estimate is optimal, i.e. the KF reduces to  an MMS estimate rather than a {\it linear} MMS estimate. It is clear that in non-Gaussian setting, the classical KF exhibits sub-optimal behavior only. Due to this fact, there was a need for a new estimator that improves the KF robustness  against outliers or impulsive noises.

For linear non-Gaussian state-space models, the robust maximum correntropy Kalman filter (MCKF) and the maximum correntropy criterion Kalman filter  (MCC-KF) have been recently developed in~\cite{Chen2017,liu2017} and~\cite{izanloo2016kalman}, respectively.  As all Kalman-like filtering algorithms, they compute the first two moments (i.e. the mean and the covariance) for constructing the optimal estimate. However, in contrast to the classical KF, these recent developments utilize the robust MCC as the optimality criterion, instead of using the MMS cost function. As a result, the new filters are shown to outperform the classical KF and several nonlinear Kalman-like filtering techniques in the presence of non-Gaussian uncertainties in the state-space models. Nevertheless, little attention is paid to  numerical stability of the Kalman-like filters developed under the MCC strategy, although the classical KF is widely known to suffer from the influence of  roundoff errors, severely; see~\cite{VerhaegenDooren1986,Verhaegen1989}. Our research has tended to focus on the MCC-KF technique and the design of its numerically stable square-root implementations.

The purpose of this paper is two-fold. First, we revise the previously derived MCC-KF equations and prove the related Kalman-like equality conditions. Based on this theoretical finding, we improve the previously proposed MCC-KF algorithm in the sense that the new filter (abbreviated as IMCC-KF) possesses a better estimation quality with the reduced computational cost. Second, we devise some square-root IMCC-KF implementations grounded in numerically robust orthogonal transformations. The square-root strategy is the most popular approach used for enhancing the filter numerical robustness; see~\cite{KaminskiBryson1971,Bierman1977,Sayed1994,ParkKailath1995} {\it etc}. It implies the Cholesky decomposition of error covariance matrix and, then, recursive re-calculation of its Cholesky factors instead of using full matrix. Following the latest achievements in the KF community, all square-root algorithms are formulated here in the so-called {\it array} form. This means that numerically stable orthogonal transformations are used as far as possible for updating the Cholesky factors in each iteration step. This provides a more reliable estimation procedure as explained in~\cite[Chapter~12]{KailathSayed2000}. Apart from numerical advantages, array Kalman-like algorithms are easier to implement than the explicit filter equations, because all required quantities are simply read off from the corresponding filter post-arrays. As mentioned in~\cite{ParkKailath1995}, this makes the modern KF-like algorithms better suited to parallel implementation and to very large scale integration (VLSI) implementation. Finally, all algorithms developed in this paper are demonstrated to outperform the previously proposed MCC-KF technique in two numerical examples.

\section{Maximum Correntropy Criterion Kalman Filter} \label{sec:state}
Consider the state-space equations
  \begin{align}
   x_{k} = & F_{k-1} x_{k-1}+ G_{k-1} w_{k-1}, \quad k \ge 1, \label{eq:st:1} \\
    z_k  = & H_k x_k+v_k   \label{eq:st:2}
  \end{align}
where  $x_k \in \mathbb R^n$  and $z_k \in \mathbb R^m$ are the unknown dynamic state and the observable measurement vector,
respectively. The processes $\{ w_k \}$ and $\{ v_k\}$ are zero-mean, white, uncorrelated, and have known covariance matrices $Q_k$ and $R_k$, respectively. They are also uncorrelated with the initial state $x_0$, which has the mean $\bar x_0$ and the covariance matrix $\Pi_0$.

The KF associated with state-space model~\eqref{eq:st:1}, \eqref{eq:st:2} yields the linear MMS estimate, $\hat x_{k|k}$, of the unknown dynamic state, given the available measurements $\{z_1,\ldots, z_{k}\}$. To improve the filter estimation quality in the presence of non-Gaussian noise, the MCC optimality criterion can be used instead of the MMS cost function for deriving the corresponding filtering equations. The performance index to be optimized under the MCC (with Gaussian kernel) approach is given as follows~\cite{Cinar2012,izanloo2016kalman}:
\begin{equation*}
J_m (x_k)  = G_{\sigma}\left(\|z_k - H_kx_{k}\|\right)  + G_{\sigma}\left(\|x_{k} - F_{k-1}x_{k-1}\|\right)  \label{eq:PI}
\end{equation*}
where $G_{\sigma}(\|x_k-y_k\|) = {\rm exp}\left\{ -{\|x_k -y_k\|^2}/{(2\sigma^2)}\right\}$, and $\sigma > 0$ is the kernel size or bandwidth.

Minimization of the objective function $J_m$ with respect to $x_k$ implies $\partial J_m/ \partial  x_k = 0$ and yields the equation~\cite{Cinar2012}:
\begin{equation}
(x_{k} - F_{k-1}x_{k-1}) = \frac{G_{\sigma}\left(\|z_k - H_kx_{k}\|\right) }{G_{\sigma}\left(\|x_{k} - F_{k-1}x_{k-1}\|\right)}H_k^T  (z_k - H_kx_{k}).
\label{eq:grad:cond}
\end{equation}

We note that the best estimate for state vector $x_{k-1}$ at time point $k-1$ is {\it a posteriori} estimate $\hat x_{k-1|k-1}$. Hence, from~\eqref{eq:grad:cond} one obtains the following nonlinear equation, which needs to be solved with respect to $x_k$:
\begin{equation}
 x_{k}\! = \! F_{k-1}\hat x_{k-1|k-1} \!+\! \frac{G_{\sigma}\left(\|z_k - H_kx_{k}\|\right) }{G_{\sigma}\left(\|x_{k} - F_{k-1}\hat x_{k-1|k-1}\|\right)}H_k^T\!(z_k - H_kx_{k}). \!\! \label{eq:nonlinear}
\end{equation}

The fixed point correntropy filter developed in~\cite{Cinar2012} and the MCC-KF method proposed in~\cite{izanloo2016kalman} suggest to use a fixed point rule for solving the mentioned nonlinear equation with initial approximation $x_k^{(0)} = \hat x_{k|k-1}$ at the right-hand side of~\eqref{eq:nonlinear}. Besides, both techniques imply only one iteration of the fixed point rule and, hence, by substituting $x_k \approx \hat x_{k|k-1}$ into the right-hand side of formula~\eqref{eq:nonlinear} we obtain the following recursion
\begin{equation*}
\hat x_{k|k} = F_{k-1}\hat x_{k-1|k-1} + \frac{G_{\sigma}\left(\|z_k - H_k\hat x_{k|k-1}\|\right) }{G_{\sigma}\left(\|\hat x_{k|k-1} - F_{k-1}\hat x_{k-1|k-1}\|\right)}H_k^T(z_k - H_k\hat x_{k|k-1}). \label{eq:nonlinear:2}
\end{equation*}

Next, the MCC-KF method designed in~\cite{izanloo2016kalman} integrates the KF minimum-variance estimation with the maximum correntropy filtering.  In particular, the cited paper utilizes the norm $\|\cdot\|_{R_k^{-1}}$ induced by the inverse measurement covariance matrix $R_k^{-1}$ in the numerator and the norm $\|\cdot\|_{P_{k|k-1}^{-1}}$ induced by the inverse predicted process covariance matrix $P_{k|k-1}^{-1}$ in the denominator of the recursion above. Thus, the MCC-KF is given as follows; see Algorithm~2 in~\cite{izanloo2016kalman}:

{\bf Initialization:}
\begin{align}
\hat x_{0|0} & =\E{x_0}, & P_{0|0} & = \E{ (x_{0}-\hat x_{0|0})(x_{0}-\hat x_{0|0})^T}. \label{mcc:initial}
\end{align}

{\bf Prior estimation:}
\begin{align}
\hat x_{k|k-1} & = F_{k-1}\hat x_{k-1|k-1}, \label{mcc:p:X} \\
P_{k|k-1} & = F_{k-1}P_{k-1|k-1}F_{k-1}^T+G_{k-1}Q_{k-1}G_{k-1}^T. \label{mcc:p:P}
\end{align}

{\bf Posterior estimation:}
\begin{align}
L_{k} & = \frac{G_{\sigma}\left( \| z_k - H_k\hat x_{k|k-1} \|_{R_k^{-1}} \right)}{G_{\sigma}\left( \| \hat x_{k|k-1} - F_{k-1} \hat x_{k-1|k-1}\|_{P^{-1}_{k|k-1}}\right)}, \label{mcc:f:L} \\
K^L_{k} & = (P_{k|k-1}^{-1}+L_kH_k^TR_k^{-1}H_k)^{-1}L_kH^T_kR_k^{-1}, \label{mcc:f:K} \\
\hat x_{k|k} & =    \hat x_{k|k-1}+K^L_{k}(z_k-H_k\hat x_{k|k-1}),   \label{mcc:f:X} \\
P_{k|k}  & = (I - K^L_{k}H_k)P_{k|k-1}(I - K^L_{k}H_k)^T\!+\!K^L_kR_k(K^L_k)^T. \label{mcc:f:P}
\end{align}

In the equations above, we use the new notation $K_k^L$ for the gain matrix $(P_{k|k-1}^{-1}+L_kH_k^TR_k^{-1}H_k)^{-1}L_kH^T_kR_k^{-1}$ appeared in Algorithm~2 in~\cite{izanloo2016kalman}, emphasizing the dependence of this quantity on the scalar $L_k$. This also helps us to distinguish this matrix from the classical KF feedback gain in the rest of our paper.

The readers are referred to~\cite{izanloo2016kalman} for a detailed derivation and properties of the MCC-KF estimator under consideration. In the cited paper, the MCC-KF is shown to outperform the classical KF,  the fixed point correntropy filter from~\cite{Cinar2012} and several nonlinear filtering techniques when the non-Gaussian uncertainties arise in stochastic system~\eqref{eq:st:1}, \eqref{eq:st:2}. 

It is  worth noting here that because of utilizing only one iteration of a fixed point rule for solving the underlying nonlinear equation~\eqref{eq:nonlinear}, we have $G_{\sigma}\left(\|\hat x_{k|k-1} - F_{k-1}\hat x_{k-1|k-1}\|\right)=G_{\sigma}\left(\| 0 \|\right)=1$ since $\hat x_{k|k-1} = F_{k-1}\hat x_{k-1|k-1}$. Hence, both methods in~\cite{Cinar2012,izanloo2016kalman} can be simplified since the denominator in~\eqref{mcc:f:L} is equal to 1.
For further iterates, this is not the case and the difference might be considerable. For this reason, the general form of~\eqref{mcc:f:L} is used in this paper.

The kernel size $\sigma$ plays a significant role in the behavior of any correntropy filter. For instance, the MCKF developed in~\cite{Chen2017} was shown to be reduced to the standard KF as $\sigma \to \infty$. Here, we follow the adaptive strategy suggested in~\cite{izanloo2016kalman} and implemented in~\cite{izanloo2016codes} for choosing $\sigma$ (i.e.   $\sigma = \| z_k - H_k\hat x_{k|k-1} \|_{R_k^{-1}}$ in each iteration step) in order to provide a fair comparative study with the earlier published MCC-KF method. This strategy is also motivated by a case study presented in~\cite{Cinar2012}. We stress that the problem of optimal kernel size selection is beyond the scope of this paper.

In this paper, we explain how the MCC-KF estimation quality can be further enhanced. The new improved filter is based on the Kalman-like equations proved in  Section~\ref{sec:3}. Additionally, we derive two numerically stable square-root implementations, which are the main purpose in the present study.

\section{New Improved Maximum Correntropy Criterion KF} \label{sec:3}

The previously proposed MCC-KF algorithm was shown to be coincident with the classical KF when $L_k = 1$; see~\cite{izanloo2016kalman}. To begin designing a new estimator, we first note that for the classical KF the following formulas hold~\cite[p.~128-129]{simon2006optimal}:
\begin{align}
\!\!K_{k}   & = P_{k|k}H_k^TR_k^{-1} \label{kf:K:eq2} \\
& = P_{k|k-1}H_k^TR_{e,k}^{-1}, \; R_{e,k} = H_kP_{k|k-1}H_k^T+R_k. \label{kf:K:eq1} \\
\!\!P_{k|k}  & = \left( P_{k|k-1}^{-1}+H^T_kR_k^{-1}H_k\right)^{-1} \label{kf:P:eq1} \\
  & = (I - K_{k}H_k)P_{k|k-1} \label{kf:P:eq2} \\
  & = (I - K_{k}H_k)P_{k|k-1}(I - K_{k}H_k)^T+K_kR_kK_k^T \label{kf:P:eq3}
\end{align}
The important property of the KF for Gaussian state-space models~\eqref{eq:st:1}, \eqref{eq:st:2} is $e_k \sim {\cal N}\left(0, R_{e,k}\right)$ where $e_k = z_k-H_k\hat x_{k|k-1}$ are called innovations (or residuals) of the discrete-time KF.


We prove the following theoretical result.

\begin{lem} \label{lemma:1}
Consider state-space model~\eqref{eq:st:1}, \eqref{eq:st:2} where non-Gaussian uncertainties might arise. Similarly to the classical KF, the following formulas can be proved when the filter feedback gain obeys~\eqref{mcc:f:K}
\begin{align}
K_{k}^L &   = (P_{k|k-1}^{-1}+L_kH_k^TR_k^{-1}H_k)^{-1}L_kH^T_kR_k^{-1} \nonumber \\
& = P_{k|k}L_kH_k^TR_k^{-1} \label{mcc:K:eq2} \\
 & = P_{k|k-1}L_kH_k^T\left(R_{e,k}^L\right)^{-1}, \; R_{e,k}^L = H_kP_{k|k-1}L_kH_k^T+R_k. \label{mcc:K:eq1} \\
P_{k|k}  & \! = \left( P_{k|k-1}^{-1}+L_kH^T_kR_k^{-1}H_k\right)^{-1} \label{mcc:P:eq1} \\
 &  = (I - K_{k}^LH_k)P_{k|k-1} \label{mcc:P:eq2} \\
 &  = (I\! - \!K_{k}^LH_k)P_{k|k-1}(I  -  K_{k}^LH_kL_k)^T+K_k^LR_k\left(K_k^L\right)^T \label{mcc:P:eq3}
\end{align}
where $L_k$ is computed by formula~\eqref{mcc:f:L}.
\end{lem}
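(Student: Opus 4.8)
The plan is to take the information (inverse-covariance) form \eqref{mcc:P:eq1} as the working definition of the posterior covariance $P_{k|k}$, consistent with the gain \eqref{mcc:f:K}, and to derive every other identity from these two relations. The single observation that makes the whole argument run in exact parallel with the classical KF identities \eqref{kf:K:eq2}--\eqref{kf:P:eq3} is that $L_k$ defined in \eqref{mcc:f:L} is a \emph{scalar} (a ratio of two Gaussian kernel values). Consequently $L_k$ commutes with every matrix and may be absorbed into $R_k^{-1}$; the whole derivation is then the standard one with $H_k^TR_k^{-1}$ systematically replaced by $L_kH_k^TR_k^{-1}$.

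First I would establish \eqref{mcc:K:eq2} by inspection: substituting $P_{k|k}=(P_{k|k-1}^{-1}+L_kH_k^TR_k^{-1}H_k)^{-1}$ from \eqref{mcc:P:eq1} into the definition \eqref{mcc:f:K} gives $K_k^L=P_{k|k}L_kH_k^TR_k^{-1}$ immediately. Next, to obtain the alternative gain form \eqref{mcc:K:eq1} together with the update \eqref{mcc:P:eq2}, I would apply the matrix inversion lemma to \eqref{mcc:P:eq1}. Writing $L_kH_k^TR_k^{-1}H_k=H_k^T(L_kR_k^{-1})H_k$ and treating $L_kR_k^{-1}=(R_k/L_k)^{-1}$, the lemma yields
\[
P_{k|k}=P_{k|k-1}-P_{k|k-1}L_kH_k^T\left(R_{e,k}^L\right)^{-1}H_kP_{k|k-1},
\]
where the scalar recombines precisely into $R_{e,k}^L=H_kP_{k|k-1}L_kH_k^T+R_k$. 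Reading off the leading factor gives $K_k^L=P_{k|k-1}L_kH_k^T(R_{e,k}^L)^{-1}$, which is \eqref{mcc:K:eq1}, and resubstituting it produces $P_{k|k}=(I-K_k^LH_k)P_{k|k-1}$, which is \eqref{mcc:P:eq2}. A one-line cross-check using $R_{e,k}^L-H_kP_{k|k-1}L_kH_k^T=R_k$ confirms that the two gain expressions \eqref{mcc:K:eq2} and \eqref{mcc:K:eq1} coincide.

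For the stabilized (Joseph-type) form \eqref{mcc:P:eq3}, the plan is to expand its right-hand side and show it collapses to \eqref{mcc:P:eq2}. After multiplying out $(I-K_k^LH_k)P_{k|k-1}(I-K_k^LH_kL_k)^T$ and removing the part $(I-K_k^LH_k)P_{k|k-1}$ already accounted for by \eqref{mcc:P:eq2}, the surviving terms are $K_k^L\bigl(H_kP_{k|k-1}L_kH_k^T+R_k\bigr)(K_k^L)^T-P_{k|k-1}L_kH_k^T(K_k^L)^T$; invoking the identity $K_k^LR_{e,k}^L=P_{k|k-1}L_kH_k^T$ from \eqref{mcc:K:eq1} makes these two terms cancel identically. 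The step I expect to require the most care is precisely this last one: the asymmetric factor $(I-K_k^LH_kL_k)^T$ in \eqref{mcc:P:eq3} carries an \emph{extra} scalar $L_k$ inside the transpose, and it is exactly this factor that forces the cross terms to vanish. This is the feature that distinguishes the corrected identity \eqref{mcc:P:eq3} from the symmetric Joseph form \eqref{mcc:f:P} originally stated for the MCC-KF, so the main obstacle is less a computational difficulty than the need to track the placement of the scalar $L_k$ faithfully throughout the expansion.
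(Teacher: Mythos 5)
Your proposal is correct, and it relies on exactly the same algebraic ingredients as the paper's proof: the scalar nature of $L_k$, the Sherman--Morrison--Woodbury identity applied to \eqref{mcc:P:eq1}, and the relation $K_k^LR_{e,k}^L = P_{k|k-1}L_kH_k^T$. The organization, however, is genuinely different. The paper proves \eqref{mcc:K:eq1} by substituting \eqref{mcc:P:eq2} into \eqref{mcc:K:eq2} and solving the resulting linear equation for $K_k^L$, and only afterwards establishes \eqref{mcc:P:eq2} from \eqref{mcc:P:eq1} via the matrix inversion lemma, using \eqref{mcc:K:eq1} to recognize the gain; strictly read, that ordering is circular and is best understood as a verification of mutual consistency of the stated identities. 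You instead anchor everything on \eqref{mcc:P:eq1} and \eqref{mcc:f:K} as working definitions, perform the Woodbury expansion first, and then confirm via the identity $R_{e,k}^L - H_kP_{k|k-1}L_kH_k^T = R_k$ that the expression read off from that expansion agrees with the already-established form \eqref{mcc:K:eq2} --- a linear, non-circular derivation (and note that the cross-check is indispensable here: ``reading off the leading factor'' alone would not identify it with the gain defined by \eqref{mcc:f:K}). For the Joseph-type form \eqref{mcc:P:eq3} the two arguments are mirror images of one another: you expand its right-hand side and collapse it to \eqref{mcc:P:eq2} using $K_k^LR_{e,k}^L = P_{k|k-1}L_kH_k^T$, while the paper builds \eqref{mcc:P:eq3} up from \eqref{mcc:P:eq2} by adding and subtracting $K_k^LR_k\left(K_k^L\right)^T$ and substituting $K_k^L = (I - K_{k}^LH_k)P_{k|k-1}L_kH_k^TR_k^{-1}$. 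Your direction exhibits the extra scalar $L_k$ inside the factor $(I - K_{k}^LH_kL_k)^T$ --- the feature distinguishing \eqref{mcc:P:eq3} from the original symmetric form \eqref{mcc:f:P} --- as precisely the condition forcing the cross terms to cancel; the paper's direction shows constructively where that factor arises, which makes the parallel with the classical KF identities \eqref{kf:K:eq2}--\eqref{kf:P:eq3} more transparent.
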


\begin{proof} First, we note that formula~\eqref{mcc:f:K} of the original MCC-KF, i.e. $K_k^L = (P_{k|k-1}^{-1}+L_kH_k^TR_k^{-1}H_k)^{-1}L_kH^T_kR_k^{-1}$, is obtained by a simple substitution of~\eqref{mcc:P:eq1} into~\eqref{mcc:K:eq2}.

Next, we prove the equivalence of formulas~\eqref{mcc:K:eq2} and~\eqref{mcc:K:eq1} used for the feedback gain computation $K_k^L$. Having substituted~\eqref{mcc:P:eq2} into~\eqref{mcc:K:eq2}, we arrive at
\begin{align*}
K_k^L & =  P_{k|k-1}L_kH_k^TR_k^{-1} - K_{k}^LH_kP_{k|k-1}L_kH_k^TR_k^{-1}.
\end{align*}
Hence,
\begin{align*}
K_k^L(I + H_kP_{k|k-1}L_kH_k^TR_k^{-1}) & =  P_{k|k-1}L_kH_k^TR_k^{-1}, \\
K_k^L(R_k + H_kP_{k|k-1}L_kH_k^T)R_k^{-1} & =  P_{k|k-1}L_kH_k^TR_k^{-1}
\end{align*}
and, finally, we get
\begin{align*}
K_k^L & =  P_{k|k-1}L_kH_k^T\left(R_k + H_kP_{k|k-1}L_kH_k^T\right)^{-1}.
\end{align*}
For the classical KF we have $K_k = P_{k|k-1}H_k^TR_{e,k}^{-1}$ where $R_{e,k} = H_kP_{k|k-1}H_k^T + R_k$. Similarly we define $R_{e,k}^{L} = H_kP_{k|k-1}L_kH_k^T+R_k$ and the formula above can be written in the following form: $K_k^L = P_{k|k-1}L_kH_k^T\left(R_{e,k}^L\right)^{-1}$ where $R_{e,k}^{L} = H_kP_{k|k-1}L_kH_k^T+R_k$. This is exactly equation~\eqref{mcc:K:eq1}.

Next, we need to prove the equivalence between formulas~\eqref{mcc:P:eq1}, \eqref{mcc:P:eq2} and~\eqref{mcc:P:eq3} for computing {\it a posteriori} error covariance matrix, $P_{k|k}$. First, taking into account the matrix inversion lemma, i.e. Sherman-Morrison-Woodbury formula~\cite{Higham2002}:
\[
\left(A+UCV \right)^{-1}  = A^{-1} - A^{-1}U \left(C^{-1}+VA^{-1}U \right)^{-1}VA^{-1},
\]
and by substituting~\eqref{mcc:K:eq1} into equation~\eqref{mcc:P:eq1} we obtain
\begin{align*}
P_{k|k} & = (P_{k|k-1}^{-1}+L_kH_k^TR_k^{-1}H_k)^{-1} = P_{k|k-1} \\
& - P_{k|k-1}L_kH_k^T \left(R_k +H_kP_{k|k-1}L_kH_k^T \right)^{-1} H_kP_{k|k-1} \\
& = P_{k|k-1} - K_k^LH_kP_{k|k-1} =  (I - K_{k}^LH_k)P_{k|k-1}.
\end{align*}
The last expression in the formula above is exactly equation~\eqref{mcc:P:eq2}. Hence, the Kalman-like formulas~\eqref{mcc:P:eq1}, \eqref{mcc:P:eq2} hold when  the feedback gain $K_k^L$ obeys~\eqref{mcc:f:K}.

Finally, we wish to prove~\eqref{mcc:P:eq3}. With trivial manipulations, formula~\eqref{mcc:P:eq2} is transformed to the form
\begin{align}
P_{k|k} & = (I - K_{k}^LH_k)P_{k|k-1} + K_k^LR_k\left(K_k^L\right)^T - K_k^LR_k\left(K_k^L\right)^T. \label{eq:help:1}
\end{align}
Next, by substituting~\eqref{mcc:P:eq2} into~\eqref{mcc:K:eq2}, we have
\begin{align}
K_k^L & =  (I - K_{k}^LH_k)P_{k|k-1}L_kH_k^TR_k^{-1}. \label{eq:help:2}
\end{align}
Taking into account the fact that $P_{k|k-1}$ is symmetric, and by substituting~\eqref{eq:help:2} into~\eqref{eq:help:1}, we derive equation~\eqref{mcc:P:eq3} as follows:
\begin{align*}
P_{k|k} & = (I - K_{k}^LH_k)P_{k|k-1} + K_k^LR_k\left(K_k^L\right)^T  \\
& - (I - K_{k}^LH_k)P_{k|k-1}L_kH_k^TR_k^{-1}R_k\left(K_k^L\right)^T \\
& =  (I - K_{k}^LH_k)(P_{k|k-1} - P_{k|k-1}L_kH_k^TK_k^T) + K_k^LR_k\left(K_k^L\right)^T \\
& =  (I - K_{k}^LH_k)P_{k|k-1}(I - K_{k}^LH_kL_k)^T + K_k^LR_k\left(K_k^L\right)^T.
\end{align*}

Hence, the algebraic equivalence between expressions~\eqref{mcc:P:eq2}, \eqref{mcc:P:eq3} is proved. This means that the Kalman-like formulas~\eqref{mcc:P:eq1}~-- \eqref{mcc:P:eq3} hold for {\it a posteriori} error covariance matrix when the feedback gain $K_k^L$ obeys~\eqref{mcc:f:K} or equivalently~\eqref{mcc:K:eq2}, \eqref{mcc:K:eq1}. This completes the proof.
\end{proof}

\setcounter{thm}{0}
\begin{rem}
It is interesting to note that there is a method-invariant form for calculating {\it a posteriori} error covariance $P_{k|k}$ in the classical KF and the MCC-KF approach. More precisely, equations~\eqref{kf:P:eq2} and~\eqref{mcc:P:eq2} are the same in their forms, except the way of computing the feedback gain; see the term $K_k$ in~\eqref{kf:P:eq2} and the term $K_k^L$ in~\eqref{mcc:P:eq2}. We stress that the other formulas for covariance calculation differ in their forms.
\end{rem}

\begin{rem}
The classical KF equation~\eqref{kf:P:eq3} is used in the original MCC-KF algorithm proposed in~\cite{izanloo2016kalman} for calculating the error covariance matrix $P_{k|k}$; see equation~\eqref{mcc:f:P}. In contrast to the previously-proposed MCC-KF version, Lemma~1 suggests to use formula~\eqref{mcc:P:eq3} instead. It involves the covariance inflation parameter $L_k$ in computing $P_{k|k}$ (which is not the case for the original MCC-KF). It is worth noting here that this inflation parameter is computed based on the MCC cost function and can serve as a scale to control information inflation of $P_{k|k}$.
\end{rem}

The theoretical result obtained in Lemma~\ref{lemma:1} suggests that equation~\eqref{mcc:P:eq3} should be used instead of~\eqref{mcc:f:P} in the MCC-KF computational scheme. More precisely, formula~\eqref{mcc:f:K} for the feedback gain computation seems to be inconsistent with the  MCC-KF error covariance calculation by~\eqref{mcc:f:P} because of the missing multiplier $L_k$ in~\eqref{mcc:f:P}; see formula~\eqref{mcc:P:eq3}. Hence, the accuracy of the MCC-KF filter might be improved with the above-proven theoretical result. To begin constructing the new improved MCC-KF technique (IMCC-KF), we first replace equation~\eqref{mcc:f:P} by~\eqref{mcc:P:eq3}. The results of numerical experiments presented in Section~5 confirm that this simple amendment  improves the state estimation accuracy of the original MCC-KF.

Next, we note that the MCC-KF feedback gain $K_k^L$ calculation in~\eqref{mcc:f:K} requires two $n \times n$ and one $m \times m$ matrices' inversions, because $P_{k|k-1} \in \mathbb{R}^{n \times n}$, $(P_{k|k-1}^{-1}+L_kH^T_kR_k^{-1}H_k) \in \mathbb{R}^{n \times n}$ and $R_k \in \mathbb{R}^{m \times m}$. Therefore, such MCC-KF implementation becomes impractical when the dimensions of the dynamic state and the measurement vector increase. Apart from the computation complexity issue, it is also preferable to avoid the matrix inversion operation. The latter is particularly advantageous from the numerical stability viewpoint. In our novel IMCC-KF technique, we use formula~\eqref{mcc:K:eq1} instead of~\eqref{mcc:f:K} in the feedback gain computation $K_k^L$. This modification avoids two $n \times n$ matrices' inversions and requires only one inversion of the matrix $R_{e,k}^{L} = H_kP_{k|k-1}L_kH_k^T+R_k$. The described amendment can be performed because of the algebraic equivalence of equations~\eqref{mcc:f:K} and~\eqref{mcc:K:eq2}, \eqref{mcc:K:eq1} proved in Lemma~1. Additionally, following the discussion in~\cite[p.~129]{simon2006optimal} we suggest to use computationally simpler expression~\eqref{mcc:P:eq2} than~\eqref{mcc:P:eq3} for calculating $P_{k|k}$. As proved in Lemma~1, these are mathematically equivalent and, hence, can be both utilized in the $P_{k|k}$ calculation.
In summary, all the mentioned improvements yield the new IMCC-KF estimator summarized in the form of Algorithm~1 below.

\begin{codebox}
\Procname{{\bf Algorithm~1}. $\proc{IMCC-KF}$ ({\it Improved conventional version})}
\li \textsc{Initialization:} (k=0) $\hat x_{0|0} = \bar x_0$ and $P_{0|0} = \Pi_0$.
\zi \textsc{Time Update}: (k=1, \ldots, N) \Comment{\small\textsc{Priori estimation}}
\li \>$\hat x_{k|k-1}  = F_{k-1}\hat x_{k-1|k-1},$ \label{ic:mcc:p:X}
\li \>$P_{k|k-1}  = F_{k-1}P_{k-1|k-1}F_{k-1}^T+G_{k-1}Q_{k-1}G_{k-1}^T.$ \label{ic:mcc:p:P}
\end{codebox}
\begin{codebox}
\setlinenumberplus{ic:mcc:p:P}{1}
\zi \textsc{Measurement Update}:  (k=1, \ldots, N)  \Comment{\small\textsc{Posteriori estimation}}
\li \>Compute $L_k$ by MCC-KF formula~\eqref{mcc:f:L}, \label{ic:mcc:f:L}
\li \>$K_{k}^L  = P_{k|k-1}L_kH_k^T(H_kP_{k|k-1}L_kH_k^T+R_k)^{-1},$ \label{ic:mcc:f:K}
\li \>$\hat x_{k|k}  =    \hat x_{k|k-1}+K_{k}^L(z_k-H_k\hat x_{k|k-1}),$   \label{ic:mcc:f:X}
\li \>$P_{k|k}  = (I - K_{k}^LH_k)P_{k|k-1}.$ \label{ic:mcc:f:P}
\end{codebox}

The above-presented IMCC-KF  is formulated in the so-called {\it conventional} form, i.e. Algorithm~1 recursively updates the entire matrices $P_{k|k-1}$ and $P_{k|k}$ in each iteration step of the filter. The method can be improved further by noting that any covariance matrix is symmetric and, hence, only its upper-triangular (or lower-triangular) part is to be re-calculated in each iteration step of the filter, only. To implement the mentioned modification, the square-root (SR) approach is widely used in the KF community. The resulted SR filters are inherently more stable (with respect to roundoff errors) than any {\it conventional} implementation and, hence, they are preferable for practical use; see the numerical results of ill-conditioned tests in~\cite[Chapter~6]{GrewalAndrews2015}.

\section{New Square-Root IMCC-KF Implementations}  \label{filters}

The most popular approach for designing factored-form KF implementations (square root filters) is grounded in the covariance matrix Cholesky decomposition; see the detailed explanation in~\cite[p.~18]{GrewalAndrews2015}.  The important fact to be taken into account is that the Cholesky decomposition exists and is unique when the symmetric matrix to be decomposed is positive definite~\cite{Golub1983}. If the matrix is a positive semi-definite, then the Cholesky decomposition still exists, however, it is not unique~\cite{Higham1990}. More precisely, the Cholesky decomposition implies the factorization of a symmetric positive definite matrix $A$ in the following form: $A=(A^{1/2})^T(A^{1/2})$. Such factors can be made unique by insisting, for instance, that the factors have a triangular form (with positive diagonal elements) or to be symmetric~\cite{Park1994a}. In most applications, the triangular form is preferred. However, we may remark that sometimes it is not required and, hence, other SR filtering variants might be considered for various reasons; e.g., it saves computations in~\cite{Park1994a}.

In this paper, we use the Cholesky decomposition of a symmetric positive definite matrix $A$ in the following form: ${A=A^{T/2} A^{1/2}}$ where $A^{1/2}$ is an upper triangular matrix with positive diagonal elements. For convenience, we also write ${A^{-1/2} \equiv (A^{1/2})^{-1}}$, ${A^{-T/2} \equiv (A^{-1/2})^T}$.
The key idea of the  SR filtering strategy is a replacement of the state error covariance matrix, $P$, by its Cholesky factors and, then, re-formulation of the filtering equations in terms of these factors $P^{T/2}$ and $P^{1/2}$ only. Undoubtedly, the SR approach is not free of roundoff errors, however, it is motivated by two considerations~\cite{KaminskiBryson1971}: ``1) the product $P^{T/2}P^{1/2}$ can never be indefinite, even in the presence of roundoff errors, while roundoff errors sometimes cause the computed value of $P$ to be indefinite; 2) the numerical conditioning of $P^{1/2}$ is generally much better than that of $P$. More precisely, the condition number $K(P) = K(P^{T/2}P^{1/2}) = [K(P^{1/2})]^2$. This means that while numerical operation with $P$ may encounter difficulties when $K(P) = 10^p$, the SR filter should function until $K(P) = 10^{2p}$, i.e. with double precision.''

Furthermore, modern SR methods imply $QR$ factorization in each iteration step of the filter for updating the corresponding Cholesky  factors as follows: first,  the pre-array $A$ is built from the filter quantities that are available at the current step. Next, an orthogonal operator $\mathfrak{V}$ is applied to the pre-array in order to get an upper triangular (or lower triangular) form of the post-array $R$ such that $\mathfrak{V}A=R$. Finally, the updated filter quantities are simply read off from the post-array $R$; see~\cite[Chapter~12]{KailathSayed2000}.

Taking into account that $L_k$ in~\eqref{mcc:f:L} is a scalar value, two SR-based IMCC-KF implementations are designed, below.

\begin{codebox}
\Procname{{\bf Algorithm~2}. $\proc{SR-based IMCC-KF}$ ({\it Square-root algorithm})}
\zi \textsc{Initialization:} (k=0)
\li \>Apply Cholesky decomposition: $\Pi_0 = \Pi_0^{T/2}\Pi_0^{1/2}$
\li \>Set the initial values: $\hat x_{0|0} = \bar x_0$ and $P_{0|0}^{1/2} = \Pi_0^{1/2}$. \label{sr:initial}
\zi \textsc{Time Update}: (k=1, \ldots, N) \Comment{\small\textsc{Priori estimation}}
\li \>Repeat line~\ref{ic:mcc:p:X} of the \proc{IMCC-KF} to find $\hat x_{k|k-1}$.
\li \>$\mathfrak{V}
\underbrace{
\begin{bmatrix}
P_{k-1|k-1}^{1/2}F_{k-1}^T\\
Q_{k-1}^{1/2}G^T_{k-1}
\end{bmatrix}
}_{\mbox{\small Pre-array}}
  =
\underbrace{
\begin{bmatrix}
P_{k|k-1}^{1/2} \\
0
\end{bmatrix}
}_{\mbox{\small Post-array}}$.  \label{sr:mcc:p:P}
\zi \textsc{Measurement Update}:  (k=1, \ldots, N)  \Comment{\small\textsc{Posteriori estimation}}
\li \>Compute $L_k$ by MCC-KF formula~\eqref{mcc:f:L}.
\li \>$\mathfrak{V}
\underbrace{
\begin{bmatrix}
R_k^{1/2} & 0  \\
L_k^{1/2}P_{k|k-1}^{1/2}H_k^T & P_{k|k-1}^{1/2}
\end{bmatrix}
}_{\mbox{\small Pre-array}}
  =
\underbrace{
\begin{bmatrix}
\left(R_{e,k}^L\right)^{1/2} & \left(\bar K_{k}^L\right)^T \\
 0 & P_{k|k}^{1/2}
\end{bmatrix}
}_{\mbox{\small Post-array}}$,  \label{SR:p:2}
\li \>$\hat x_{k|k} = \hat x_{k|k-1}+L_k^{1/2}\bar K_{k}^L\left(R_{e,k}^L\right)^{-T/2}(z_k-H_k\hat x_{k|k-1}).$ \label{SR:x:2}
\end{codebox}

As can be seen, the Cholesky decomposition is applied only once for the error covariance matrix factorization. In fact, only the initial $\Pi_0$ is decomposed, i.e. $\Pi_0 = \Pi_0^{T/2}\Pi_0^{1/2}$. Then, the Cholesky  factors $P_{k|k-1}^{1/2}$ and $P_{k|k}^{1/2}$ are recursively updated instead of the full matrices $P_{k|k-1}$ and $P_{k|k}$. For that, stable orthogonal transformations are utilized as far as possible. In this paper, we develop the methods where $\mathfrak{V}$ is any orthogonal transformation such that the corresponding post-array is a block {\it upper triangular} matrix. Finally, we remark that, in any SR-based filter, the influence of roundoff  errors is still  present, however, the resulted error covariance $P_{k|k}=P_{k|k}^{T/2}P_{k|k}^{1/2}$ (and $P_{k|k-1}=P_{k|k-1}^{T/2}P_{k|k-1}^{1/2}$) is much more likely to be a positive definite matrix~\cite{KailathSayed2000}. In fact, this matrix product will be always a symmetric matrix with nonnegative diagonal entries.

The SR variant (Algorithm~2) is algebraically equivalent to the {\it conventional} \proc{IMCC-KF} implementation (Algorithm~1). It can be easily proved by taking into account the properties of orthogonal matrices. Indeed, from a general form of the SR-based filter iterates (i.e. $\mathfrak{V}A=R$), we obtain $R^T R = A^T\mathfrak{V}^T\mathfrak{V}A = A^TA$. Next, comparing both sides of the resulted matrix equality $A^TA=R^TR$ we derive the required formulas.  More precisely, from equation in line~\ref{sr:mcc:p:P} of the \proc{SR-based IMCC-KF} we obtain
\[F_{k-1}P_{k-1|k-1}^{T/2}P_{k-1|k-1}^{1/2}F_{k-1}^T+G_{k-1}Q_{k-1}^{T/2}Q_{k-1}^{1/2}G_{k-1}^T=P_{k|k-1}^{T/2}P_{k|k-1}^{1/2},\]
which is exactly the equation in line~\ref{ic:mcc:p:P} of the {\it conventional} \proc{IMCC-KF}.

From line~\ref{SR:p:2} of Algorithm~2 we have the following equalities:
\begin{align}
R_k^{T/2}R_k^{1/2}+H_kP_{k|k-1}^{T/2}L_kP_{k|k-1}^{1/2}H_k^T & =  \left(R_{e,k}^L\right)^{T/2}\left(R_{e,k}^L\right)^{1/2}, \label{Use:1} \\
P_{k|k-1}^{T/2}L_k^{1/2}P_{k|k-1}^{1/2}H_k^T  & = \bar K_{k}^L \left(R_{e,k}^L\right)^{1/2},  \label{Use:2} \\
P^{T/2}_{k|k}P^{1/2}_{k|k} + \bar K_k^L \left(\bar K_k^L\right)^T  & = P^{T/2}_{k|k-1}P^{1/2}_{k|k-1}. \label{Use:3}
\end{align}
 Since $L_k$ is a scalar value, we obtain $R_{e,k}^L = H_kP_{k|k-1}L_kH_k^T+R_k$ from equation~\eqref{Use:1}. Next, from~\eqref{Use:2} we have
 \begin{equation}
 \bar K_{k}^L=L_k^{1/2}P_{k|k-1}H_k^T(R_{e,k}^L)^{-1/2}, \label{eq:normK}
 \end{equation}
   which is the ``normalized'' feedback gain. According to equation~\eqref{mcc:K:eq1}, the  relation between $K_k^L$ in line~\ref{ic:mcc:f:K} of the \proc{IMCC-KF} and its ``normalized'' version $\bar K_k^L$ in line~\ref{SR:p:2} of the \proc{SR-based IMCC-KF} is the following: $K_k^L = L_k^{1/2} \bar K_k^L (R_{e,k}^L)^{-T/2}$. Thus, from equation in line~\ref{ic:mcc:f:X} of Algorithm~1 we obtain the formula for $\hat x_{k|k}$ computation in  line~\ref{SR:x:2} of Algorithm~2 as follows:
 \begin{align*}
 \hat x_{k|k} &  =  \hat x_{k|k-1}+K_{k}^L(z_k-H_k\hat x_{k|k-1}) \\
 & = \hat x_{k|k-1}+L_k^{1/2}\bar K_{k}^L\left(R_{e,k}^L\right)^{-T/2}(z_k-H_k\hat x_{k|k-1}).
 \end{align*}

Finally, taking into account that the error covariance matrix is symmetric and $L_k$ is a scalar, from~\eqref{Use:3}, \eqref{eq:normK}, we obtain
\begin{align*}
P_{k|k} & = P_{k|k-1}-\bar K_k^L \left(\bar K_k^L\right)^T \\
& = P_{k|k-1} - L_k^{1/2}P_{k|k-1}H_k^T(R_{e,k}^L)^{-1/2}(R_{e,k}^L)^{-T/2}H_kP_{k|k-1}L_k^{1/2} \\
& = P_{k|k-1} - P_{k|k-1}L_kH_k^T(R_{e,k}^L)^{-1}H_kP_{k|k-1} \\
& = P_{k|k-1} - K_k^LH_kP_{k|k-1} = (I - K_{k}^LH_k)P_{k|k-1}.
\end{align*}
The last expression in the formula above is exactly the equation in line~\ref{ic:mcc:f:P} of the {\it conventional} \proc{IMCC-KF} (Algorithm~1). This completes the proof of   algebraic equivalence between the \proc{IMCC-KF}  and its SR-based variant (Algorithm~2).

The analysis of Algorithms~1 and 2 suggests that their implementations demand one $m\times m$ matrix inversion; see lines~\ref{ic:mcc:f:K} and~\ref{SR:x:2}, respectively. However, in contrast to the {\it conventional} implementation (Algorithm~1), the SR-based variant (Algorithm~2) requires the inversion of only upper triangular matrix $\left(R_{e,k}^L\right)^{1/2}$ instead of $R_{e,k}^L$. The latter can be done by the computationally cheap backward substitution.

Eventually, Algorithm~2 can be improved further such that the new method avoids $R_{e,k}^L$ (or its  Cholesky  factor) inversion. Following~\cite{ParkKailath1995}, we develop the {\it extended} SR-based version.
\begin{codebox}
\Procname{{\bf Algorithm~3}. $\proc{eSR-based IMCC-KF}$ ({\it extended SR algorithm})}
\zi \textsc{Initialization:} (k=0)
\li \>Apply Cholesky decomposition: $\Pi_0 = \Pi_0^{T/2}\Pi_0^{1/2}$
\li \>Set $P_{0|0}^{1/2} = \Pi_0^{1/2}$ and $P^{-T/2}_{0|0}\hat x_{0|0} = \Pi_0^{-T/2}\bar x_{0|0}$. \label{sr:initial1}
\zi \textsc{Time Update}: (k=1, \ldots, N) \Comment{\small\textsc{Priori estimation}}
\li $\mathfrak{V}
\underbrace{
\begin{bmatrix*}[l]
P_{k-1|k-1}^{1/2}F_{k-1}^T \!\!\!& | \: P^{-T/2}_{k-1|k-1}\hat x_{k-1|k-1}\\
Q_{k-1}^{1/2}G^T_{k-1}     \!\!\!& | \: \quad 0
\end{bmatrix*}
}_{\mbox{\small Pre-array}}
\!  =\!
\underbrace{
\begin{bmatrix*}[l]
P_{k|k-1}^{1/2} \!\!\!& | \:P^{-T/2}_{k|k-1}\hat x_{k|k-1} \\
\quad 0 \!\!\!& | \: (*)
\end{bmatrix*}
}_{\mbox{\small Post-array}}$  \label{eSR:p:1}
\zi \>where $\mathfrak{V}$ is any orthogonal transformation such that the
\zi \>first block column of the post-array is upper triangular.
\li \>Read off $\left(P_{k|k-1}^{1/2}\right)$, $\left(P^{-T/2}_{k|k-1}\hat x_{k|k-1}\right)$ from the post-array.
\li \>Compute $\hat x_{k|k-1}  = \left(P_{k|k-1}^{1/2}\right)^T \left(P^{-T/2}_{k|k-1}\hat x_{k|k-1}\right).$ \label{eSR:x:1}
\end{codebox}
\begin{codebox}
\setlinenumberplus{eSR:x:1}{1}
\zi \textsc{Measurement Update}:  (k=1, \ldots, N)  \Comment{\small\textsc{Posteriori estimation}}
\li \>Compute $L_k$ by MCC-KF formula~\eqref{mcc:f:L}.
\li \>$
\mathfrak{V}
\begin{bmatrix*}[l]
R_k^{1/2} & 0  & | \: -R^{-T/2}_kL_k^{1/2}z_k\\
L_k^{1/2}P_{k|k-1}^{1/2}H_k^T & P_{k|k-1}^{1/2} & | \: P^{-T/2}_{k|k-1}\hat x_{k|k-1}
\end{bmatrix*}
$  \Comment{\small{\mbox{\small Pre-array}}} \label{eSR:p:2}
\zi \>$ =
\begin{bmatrix*}[l]
\left(R_{e,k}^L\right)^{1/2} & \left(\bar K_{k}^L\right)^T & | \: -\bar e_k \\
 0 & P_{k|k}^{1/2} & | \: P^{-T/2}_{k|k}\hat x_{k|k}
\end{bmatrix*}$  \Comment{\small{\mbox{\small Post-array}}}
\zi \>where $\mathfrak{V}$ is any orthogonal operator such that the first
\zi \>two block columns of the post-array is upper triangular.
\li \>Read off $\left(P_{k|k}^{1/2}\right)$ and $\left(P^{-T/2}_{k|k}\hat x_{k|k}\right)$ from the post-array.
\li \>Compute $\hat x_{k|k}  = \left( P_{k|k}^{1/2}\right)^T \left(P^{-T/2}_{k|k}\hat x_{k|k}\right).$ \label{eSR:x:2}
\end{codebox}

As can be seen, the state vector estimates $\hat x_{k|k-1}$ and $\hat x_{k|k}$ are computed by a simple multiplication of the blocks that are directly read off from the corresponding post-arrays; see equations in lines~\ref{eSR:x:1}, \ref{eSR:x:2} of Algorithm~3. The block $(*)$ in the post-array of the \proc{eSR-based IMCC-KF} means that these entries are of no interest in the presented filter implementation.

The novel \proc{eSR-based IMCC-KF} (Algorithm~3) is algebraically equivalent to the \proc{SR-based IMCC-KF} (Algorithm~2) and, hence, to the {\it conventional} implementation in Algorithm~1. Indeed, from line~\ref{eSR:p:1} of Algorithm~3, we have
\begin{align*}
F_{k-1}P_{k-1|k-1}^{T/2}P^{-T/2}_{k-1|k-1}\hat x_{k-1|k-1} & = P_{k|k-1}^{T/2} P^{-T/2}_{k|k-1}\hat x_{k|k-1},
\end{align*}
 which is exactly the equation in line~\ref{ic:mcc:p:X} of the \proc{IMCC-KF} (Algorithm~1). Its SR-based version (Algorithm~2) utilizes the same formula for the $\hat x_{k|k-1}$ computation, i.e. $\hat x_{k|k-1}=F_{k-1}\hat x_{k-1|k-1} $.

Next, line~\ref{eSR:p:2} in Algorithm~3 yields~\eqref{Use:1}~-- \eqref{Use:3} and
\begin{align}
\!R_k^{T/2}R^{-T/2}_kL_k^{1/2}z_k - H_k P_{k|k-1}^{T/2}L_k^{1/2}\!P^{-T/2}_{k|k-1}\hat x_{k|k-1}  &
= \left(R_{e,k}^L\right)^{T/2}\! \bar e_k, \label{Use:4} \\
P_{k|k-1}^{T/2}P^{-T/2}_{k|k-1}\hat x_{k|k-1} = P_{k|k}^{T/2}P^{-T/2}_{k|k}\hat x_{k|k} - \bar K_k^L \bar e_k.  \label{Use:5}
\end{align}
Taking into account that $L_k$ is a scalar, from~\eqref{Use:4}, we have
 \begin{equation}
 \bar e_k = \left(R_{e,k}^L\right)^{-T/2}L_k^{1/2}\left(z_k - H_k \hat x_{k|k-1} \right) = \left(R_{e,k}^L\right)^{-T/2}L_k^{1/2}e_k \label{eq:bare}
 \end{equation}
where $e_k = z_k - H_k \hat x_{k|k-1}$ is the filter residual and, hence, the quantity $\bar e_k$ is its ``normalized'' version.

From formula~\eqref{Use:5}, we get $\hat x_{k|k} = \hat x_{k|k-1} + \bar K_k^L \bar e_k$. By substituting the expressions for the ``normalized'' feedback gain~\eqref{eq:normK} and the ``normalized'' residual~\eqref{eq:bare}, we get
\begin{align*}
\hat x_{k|k} & = \hat x_{k|k-1} + \bar K_k^L \bar e_k \\
& = \hat x_{k|k-1} + \bar K_{k}^L\left(R_{e,k}^L\right)^{-T/2}L_k^{1/2}(z_k-H_k\hat x_{k|k-1}) \\
& = \hat x_{k|k-1} + L_k^{1/2}P_{k|k-1}H_k^T(R_{e,k}^L)^{-1/2} \left(R_{e,k}^L\right)^{-T/2}L_k^{1/2}e_k\\
& = \hat x_{k|k-1} + P_{k|k-1}L_kH_k^T(R_{e,k}^L)^{-1} \left(z_k - H_k \hat x_{k|k-1} \right) \\
& = \hat x_{k|k-1} + K_k^L \left(z_k - H_k \hat x_{k|k-1} \right).
\end{align*}
The second expression in the formula above is the equation in line~\ref{SR:x:2} of Algorithm~2, meanwhile the last expression coincides with line~\ref{ic:mcc:f:X} of Algorithm~1. This completes the proof of algebraic equivalence of all Algorithms~1-3.

\section{Numerical Experiments}

To fulfil a fair comparative study of the newly-developed methods and the original MCC-KF proposed in~\cite{izanloo2016kalman}, we consider the test problem in the cited paper and the accompanying MATLAB codes, which are freely available in~\cite{izanloo2016codes}. To provide the same experimental conditions for all methods under examination, we incorporate our IMCC-KF algorithms into the cited codes, where the original MCC-KF  technique has been already implemented.

\begin{table*}
\caption{The RMSE errors and the average CPU time (s) in the presence of shot and Gaussian mixture noises in Example~\ref{ex:1}, respectively. $M=100$.} \label{table:1}
\centering
{\small
\begin{tabular}{l||cccc|c||cccc|c|c}
\hline
Filter &
\multicolumn{5}{c||}{Case 1: Shot noise} &
\multicolumn{5}{c|}{Case 2: Gaussian mixture noise} & CPU \\
\cline{2-11}
\cline{2-11}
Implementation  & \multicolumn{4}{c|}{RMSE$_x$}  &  &   \multicolumn{4}{c|}{RMSE$_x$} &  &  \\
 & $x_1$ & $x_2$ & $x_3$ & $x_4$ & $\|\mbox{RMSE}_x\|_2$  & $x_1$ & $x_2$ & $x_3$ & $x_4$ & $\|\mbox{RMSE}_x\|_2$ &  \\
\hline
\hline
MCC-KF: Eqs~\eqref{mcc:p:X}-\eqref{mcc:f:P}
& 2.215    &  2.228   &  0.912  &    0.918  & 3.397  & 2.904 & 2.914 & 1.649 & 1.648 & 4.728 & 0.0307 \\
MCC-KF: Eqs~\eqref{mcc:p:X}-\eqref{mcc:f:X}, \eqref{mcc:P:eq3}
& 2.190    &   2.204   &  0.908  &   0.913 &  3.363 & 2.890  & 2.900 & 1.660 & 1.658 & 4.718 & 0.0316\\
\hline
\proc{IMCC-KF}         & 2.190  & 2.204 &  0.908  & 0.913 & 3.363 &   2.890  & 2.900 & 1.660 & 1.658 & 4.718  & 0.0140 \\
\proc{SR-based IMCC-KF} & 2.190  & 2.204 &  0.908  & 0.913 & 3.363 &  2.890  & 2.900 & 1.660 & 1.658 & 4.718  & 0.0121 \\
\proc{eSR-based IMCC-KF} & 2.190  & 2.204 &  0.908  & 0.913 & 3.363 &  2.890  & 2.900 & 1.660 & 1.658 & 4.718 & 0.0123 \\
\hline
\end{tabular}
}
\end{table*}

\begin{exmp} \label{ex:1}
 Consider a land vehicle dynamic measured in time intervals of $3$ seconds, i.e. $\Delta t = 3$ sec.,  with the heading angle $\psi$ (which is set to $\psi = 60$ deg.), as follows:
\begin{align*}
x_k & =
\begin{bmatrix}
1 & 0 & \Delta t & 0\\
0 & 1 & 0 & \Delta t \\
0& 0 & 1 & 0\\
0 & 0 & 0 & 1
\end{bmatrix}
x_{k-1}
+
\begin{bmatrix}
0 \\
0\\
\Delta t  \sin \psi \\
\Delta t  \cos \psi
\end{bmatrix}
u_{k-1} + w_{k-1}, \\
z_k & =
\begin{bmatrix}
1 & 0 & 0 & 0\\
0 & 1 & 0 & 0
\end{bmatrix}
x_k + v_k
\end{align*}
with the initial conditions being $\bar x_0 = [1, 1, 0, 0]^T$ and $\Pi_0 = diag\{[4, 4, 3, 3]\}$.
Additionally, two cases of the process and measurement noises are employed, below.

{\bf Case 1.} All entries of $w_k$ and $v_k$ are comprised of Gaussian noise plus shot noise, i.e. $w_k  = {\cal N}(\mu_x,Q)+\mbox{\tt Shot noise}$, and
$v_k = {\cal N}(\mu_z,R)+\mbox{\tt Shot noise}$. The means and covariances are taken to be $\mu_x = 0$, $\mu_z = 0$ and $Q = diag\{[0.1, 0.1, 0.1, 0.1]\}$, $R = diag\{[0.1, 0.1]\}$, respectively.

{\bf Case 2.} The $w_k$ and $v_k$ are Gaussian mixture noises. The $w_k$ is a mixture of ${\cal N} (\mu_{x1}, Q_1)$ and ${\cal N} (\mu_{x2},Q_2)$. The $v_k$ is a mixture of ${\cal N} (\mu_{z1}, R_1)$ and ${\cal N} (\mu_{z2},R_2)$. The means and covariances are taken to be $\mu_{x1} = [-3, -3, -3, -3]^T$, $\mu_{x2} = [2, 2, 2, 2]^T$,  $\mu_{z1} = [2, 2]^T$, $\mu_{z2} = [-2, -2]^T$, $Q_1=Q_2=Q$ and $R_1=R_2=R$.
\end{exmp}
\vspace{-0.3cm}

In our comparative study, the following methods are tested: 1) the original MCC-KF scheme given by equations~\eqref{mcc:p:X}~-- \eqref{mcc:f:P};
2) the original MCC-KF where only one equation is corrected, i.e. formula~\eqref{mcc:f:P} is replaced by~\eqref{mcc:P:eq3};
3) the improved method, i.e. the \proc{IMCC-KF} (Algorithm~1); 4) the \proc{SR-based IMCC-KF} (Algorithm~2); and 5) the \proc{eSR-based IMCC-KF} (Algorithm~3).

We repeat the set of numerical experiments from~\cite{izanloo2016kalman}. More precisely, the stochastic model in Example~\ref{ex:1} is simulated for $k = 1, \ldots, 300$ to generate the measurement history. For that, the shot noise (case~1) and Gaussian mixture noise (case~2) are treated separately. They are implemented exactly as in~\cite{izanloo2016codes}.
Next, the inverse problem is solved, i.e. the optimal dynamic state estimate is computed by each filtering technique under examination. We repeat the experiment $M=100$ times. To judge the quality of the estimators, the root mean square error (RMSE) in each component of the state vector averaged over $100$ Monte Carlo runs is computed as follows:
\begin{equation} \label{eq:RMSEx}
\mbox{\rm RMSE}_{x_i}=\sqrt{\frac{1}{MN}\sum \limits_{j=1}^{M} \sum \limits_{k=1}^N \left(x_{i, exact}^j(t_k) - \hat x_{i, k|k}^j\right)^2}
\end{equation}
where $M=100$ is the number of Monte-Carlo trials, $N=300$ is the number of discrete time points, the $x_{i, exact}^j(t_k)$ and $\hat x_{i, k|k}^j$ are the $i$-th entry of the ``true'' state vector (simulated) and its estimated value obtained in the $j$-th Monte-Carlo trial, respectively. The results of this set of numerical experiments are summarized in Table~\ref{table:1}, where we report the $\mbox{\rm RMSE}_{x_i}$, $i=1, \ldots, 4$, the $\|\mbox{RMSE}_x\|_2$ and the CPU time (s) averaged over $M=100$ Monte-Carlo simulations. All methods were implemented in the same precision (64-bit
floating point) in MATLAB running on a conventional PC with processor \verb"Intel(R) Core(TM) i5-2410M CPU 2.30 GHz" and with $4$ GB of installed memory (RAM).

Table~\ref{table:1} allows for the following conclusions. First, having compared the first two rows, we observe that a simple replacement of formula~\eqref{mcc:f:P} in the previously derived MCC-KF by equation~\eqref{mcc:P:eq3} in accordance with Lemma~1  enhances the filter estimation accuracy. Recall that the difference between these two versions is only in the error covariance $P_{k|k}$ computation. Although the previously proposed MCC-KF~\eqref{mcc:p:X}-\eqref{mcc:f:P} is formulated in the symmetric Joseph stabilized form~\eqref{mcc:f:P}, the results of numerical experiments suggest that appearance of multiplier $L_k$ in~\eqref{mcc:f:P} according to Lemma~1 improves the estimation accuracy in all entries of the vector $\hat x_{k|k}$ in the shot noise (case~1) and in half of the state vector components in the Gaussian mixture noise (case 2). The aggregated $\|\mbox{RMSE}_x\|_2$ quantity is less for both noise cases when the MCC-KF methodology is implemented via recursion~\eqref{mcc:p:X}-\eqref{mcc:f:X}, \eqref{mcc:P:eq3}.  In summary, the previously developed MCC-KF given by equation~\eqref{mcc:p:X}-\eqref{mcc:f:P} is less accurate compared to its novel implementation presented in Section~3, where $L_k$ is taking into account. Concerning the computational complexity, the difference between these two implementations is imperceptible and, hence, their average CPU time is the same.

Second, Table~\ref{table:1} says that Algorithms~1-3 maintain the same estimation accuracy in both noise case scenarios. This conclusion holds for all entries of the state vector to be estimated and for the aggregated $\|\mbox{RMSE}_x\|_2$ values. Hence, our theoretical expectations are realized. Indeed, the mentioned finding was anticipated, since Algorithms~1-3 are proved to be algebraically equivalent in Section~4. It is also in line with the main theoretical result proved in Lemma~1. In other words, the outcome of our numerical experiments substantiates the theoretical derivations of Lemma~1 in practice.

Concerning the computational complexity of the original MCC-KF and the novel IMCC-KF algorithm, we observe the following: the CPU time is two times lower for the IMCC-KF than for the MCC-KF. It was expected since the IMCC-KF does not require two $n \times n$ matrices' inversion for calculating the gain in contrast to the original MCC-KF. Besides, the new IMCC-KF uses a computationally simpler expression for $P_{k|k}$ computation than the original MCC-KF.

Next, consider the SR algorithms. In general, they are more computationally expensive, but inherently more stable and reliable than any {\it conventional} implementation, because of the use of numerically stable orthogonal transformations in each iteration step of the filter. We observe that for  our low-dimensional problem in Example~1 ($n=4$, $m=2$), the average CPU time of the SR IMCC-KF is almost the same as in the corresponding conventional implementation, i.e. in the IMCC-KF.

Finally, recall that the SR-based variants (Algorithms~2, 3) are numerically more robust (with respect to roundoff errors) than the {\it conventional} implementation (Algorithm~1). However, Example~1 does not suite for examination of numerical insights of the proposed computational schemes. Further, we equip Example~1 with an ill-conditioned measurement model, as it is often done in the KF literature while investigating the filter numerical stability issues; e.g., see~\cite{GrewalAndrews2015}.

\begin{table*}
\caption{
The effect of roundoff errors on the computed $\|\mbox{RMSE}_x\|_2$ in the presence of shot and Gaussian mixture noises in Example~2, respectively. $M=100$.
} \label{table:2}
\centering
{\small
\begin{tabular}{l||cccccc||cccccc}
\hline
Filter &
\multicolumn{6}{c||}{Case 1: Shot noise,  $\delta \to \epsilon_{roundoff}$} &
\multicolumn{6}{|c}{Case 2: Gaussian mixture noise,  $\delta \to \epsilon_{roundoff}$} \\
\cline{2-13}
Implementation  & $10^{-2}$ & $10^{-3}$ & $10^{-4}$ & $10^{-5}$ & $10^{-6}$ & $10^{-7}$ & $10^{-2}$ & $10^{-3}$ & $10^{-4}$ & $10^{-5}$ & $10^{-6}$ & $10^{-7}$ \\
\hline
\hline
MCC-KF: Eqs~\eqref{mcc:p:X}~-\eqref{mcc:f:P}
                         & 40.26 & 43.75   &  39.39  &   40.372  & 44.505  & \verb"NaN" &  43.38 & 41.72 &  39.70 &  39.82 &  \verb"NaN" & \verb"NaN" \\
MCC-KF: Eqs~\eqref{mcc:p:X}~-\eqref{mcc:f:X}, \eqref{mcc:P:eq3}
                         & 38.81 & 40.99   &  37.07  &   38.986 & \verb"NaN" & \verb"NaN" &  41.02 &   40.85  &  37.14 &   37.72 & \verb"NaN" & \verb"NaN" \\
\hline
\proc{IMCC-KF}           & 38.81 & 40.99 &   37.07 &     38.985 &  39.138 & \verb"NaN" & 41.02 & 40.85  & 37.14 & 37.68 & 42.19 & \verb"NaN" \\
\proc{SR-based IMCC-KF}  & 38.81 & 40.99 &   37.07 &     38.985 &  39.133 &  45.72     & 41.02 & 40.85   & 37.14 & 37.68 &  42.13 &  38.02 \\
\proc{eSR-based IMCC-KF} & 38.81 & 40.99 &   37.07 &     38.985 &  39.133 &  45.72     & 41.02 & 40.85   & 37.14 & 37.68 &  42.13 &  38.02 \\
\hline
\end{tabular}
}
\end{table*}

\begin{exmp} \label{ex:2}
Consider a land vehicle dynamic from Example~\ref{ex:1}
\begin{align*}
z_k & =
\begin{bmatrix}
1 & 1 & 1 & 1\\
1 & 1 & 1 & 1+\delta
\end{bmatrix}
x_k + v_k, \; v_k \sim {\cal N}(0,R)
\end{align*}
where $R = diag\{[\delta^2, \delta^2]\}$. To simulate roundoff we assume that
$\delta^2<\epsilon_{roundoff}$, but $\delta>\epsilon_{roundoff}$
where $\epsilon_{roundoff}$ denotes the unit roundoff
error\footnote{Computer roundoff for floating-point arithmetic is
often characterized by a single parameter $\epsilon_{roundoff}$,
defined in different sources as the largest number such that either
$1+\epsilon_{roundoff} = 1$ or $1+\epsilon_{roundoff}/2 = 1$ in
machine precision. }. Again, two cases of the process and measurement noises are employed.

{\bf Case 1.} All entries of $w_k$ are comprised of Gaussian noise plus shot noise, i.e. $w_k = {\cal N}(\mu_x,Q)+\mbox{\tt Shot noise}$ with $\mu_x = 0$ and $Q = diag\{[0.1, 0.1, 0.1, 0.1]\}$.

{\bf Case 2.} It is the same as in Example~\ref{ex:1}.
\end{exmp}

We repeat the experiments described above for $100$ Monte Carlo runs and
various ill-conditioning parameter values $\delta$. The source of the difficulty is in the matrix $R_{e,k}^L$ inversion. More precisely, we remark that although ${\rm rank}\: H =2$, the matrix $R_{e,k}^L$ becomes severely ill-conditioned as $\delta \to \epsilon_{roundoff}$, i.e. to machine precision limit. In summary, we provide the following set of numerical experiments. For each value of the parameter $\delta$, we solve the state estimation problem as it is done above in  Example~1. Then, we summarize the aggregated values $\|\mbox{RMSE}_x\|_2$  in Table~2 for each filter implementation under examination. It should be also stressed that the shot noise (case~1) is added only in the process equation in this example. Besides, following the MCC-KF implementation in~\cite{izanloo2016codes}, the noise covariances should be adapted to the sample values when the noises are generated. However, in our numerical experiments we do not adjust the covariance $R$ to the corresponding sample value.


Having analyzed the obtained results presented in Table~2, we can explore the numerical behaviour of each algorithm while growing problem ill-conditioning. We again conclude that for both noise scenarios the previously developed MCC-KF given by equations~\eqref{mcc:p:X}-\eqref{mcc:f:P} is less accurate compared with the implementation via formulas~\eqref{mcc:p:X}-\eqref{mcc:f:X}, \eqref{mcc:P:eq3}, although equation~\eqref{mcc:f:P} is formulated in the symmetric stabilized form. Both mentioned algorithms belong to the class of {\it conventional} implementations. From the second panel of Table~2, we observe their fast degradation while $\delta \to \epsilon_{roundoff}$, the machine precision limit. Indeed, already for $\delta=10^{-6}$ they provide essentially no correct digits in the computed state estimate for Gaussian mixture noise, since the  $\|\mbox{RMSE}_x\|_2$  value is 'NaN'; see the first two rows in Table~\ref{table:2}. In MatLab, the term 'NaN' stands for 'Not a Number' that actually means the failure of numerical method. The interesting fact is that for shot noise (case~1) the symmetric Joseph stabilized form~\eqref{mcc:f:P} seems to be more robust to roundoff errors compared to the implementation via equation~\eqref{mcc:P:eq3}, which is not symmetric because of the multiplier $L_k$. Hence, we can conclude that the original MCC-KF~\eqref{mcc:p:X}-\eqref{mcc:f:P} is less accurate compared with the implementation via formulas~\eqref{mcc:p:X}-\eqref{mcc:f:X}, \eqref{mcc:P:eq3}, but seems to be more robust to roundoff errors because of the symmetric Joseph stabilized form~\eqref{mcc:f:P}.

Next, we study the new IMCC-KF (Algorithm~1) that is also of {\it conventional} (non-square-root) type. We observe that the original MCC-KF based on the Joseph stabilized implementation~\eqref{mcc:f:P} is less accurate and less robust than Algorithm~1 in the both noise cases under consideration. This finding is reasonable if we recall that the new improved method (Algorithm~1) avoids two $n\times n$ matrices' inversions compared to the original MCC-KF via~\eqref{mcc:p:X}-\eqref{mcc:f:P} and the algorithm via recursion~\eqref{mcc:p:X}-\eqref{mcc:f:X}, \eqref{mcc:P:eq3}; see the discussion in Section~3.

Finally, the last two rows of Table~2 are considered. It can be seen that the SR-based algorithms (Algorithms~2, 3) outperform any {\it conventional} implementation in the estimation accuracy in the both noise case scenarios. Indeed, they degrade more slowly than any other examined counterpart as $\delta \to \epsilon_{roundoff}$. Thus, the outcome of these numerical experiments substantiates their inherent numerical stability with respect to roundoff errors.


\section{Concluding Remarks}

In this paper, the recently proposed MCC-KF technique is revised. As a result, the new improved estimator and two its SR-based variants are developed. Although all new algorithms are algebraically equivalent, the results of the numerical experiments suggest that the SR-based filters provide the best estimation quality when solving ill-conditioned state estimation problem in the presence of non-Gaussian noise. The elaborated state estimation techniques are planned to be extended to state estimation of continuous-time nonlinear stochastic systems via the accurate extended continuous-discrete KF approach presented recently in~\cite{KuKu14IEEE_TAC,KuKu15EJCON,KuKu16IEEE_TSP,KuKu16SISCI}.

\section*{Acknowledgments}
The author thanks the support of Portuguese National Fund ({\it Funda\c{c}\~{a}o para a
Ci\^{e}ncia e a Tecnologia}) within the scope of project UID/Multi/04621/2013.

\section*{References}

\end{document}